\newtheorem{theorem}{Theorem}
\newtheorem{definition}[theorem]{Definition}
\newtheorem{example}[theorem]{Example}
\newtheorem{lemma}[theorem]{Lemma}
\newtheorem{proposition}[theorem]{Proposition}
\newtheorem{remark}[theorem]{Remark}
\begin{document}

\title{Joint Range of $f$-divergences}
\author{\IEEEauthorblockN{Peter Harremo\"es}
\IEEEauthorblockA{Copenhagen Business College \\
Copenhagen, Denmark \\
E-mail: harremoes@ieee.org}
\and \authorblockN{Igor Vajda}
\authorblockA{Institute of Information and Automation \\
Prague, Czech Republic \\
E-mail: ivajda@volny.cz} }
\maketitle

\section{Divergences and divergence statistics}

Many of the divergence measures used in statistics are of the $f$-divergence
type introduced independently by I. Csisz\'{a}r \cite{Csiszar1963}, T.
Morimoto \cite{morimoto1963}, and Ali and Silvey \cite{Ali1966}. Such
divergence measures have been studied in great detail in \cite{Liese1987}.
Often one is interested inequalities for one $f$-divergence in terms of
another $f$-divergence. Such inequalities are for instance needed in order to
calculate the relative efficiency of two $f$-divergences when used for testing
goodness of fit but there are many other applications. In this paper we shall
study the more general problem of determining the joint range of any pair of
$f$-divergences. The results are useful in determining general conditions
under which information divergence is a more efficient statistic for testing
goodness of fit than another $f$-divergence, but will not be discussed in this
short paper.

Let $f:\left(  0,\infty\right)  \rightarrow\mathbb{R}$ denote a convex
function satisfying $f\left(  1\right)  =0.$ We define $f\left(  0\right)  $
as the limit $\lim_{t\rightarrow0}f\left(  t\right)  $. We define $f^{\ast
}\left(  t\right)  =tf\left(  t^{-1}\right)  .$ Then $f^{\ast}$ is a convex
function and $f^{\ast}\left(  0\right)  $ is defined as $\lim_{t\rightarrow
0}tf\left(  t^{-1}\right)  =\lim_{t\rightarrow\infty}\frac{f\left(  t\right)
}{t}.$ Assume that $P$ and $Q$ are absolutely continuous with respect to a
measure $\mu,$ and that $p=\frac{dP}{d\mu}$ and $q=\frac{dQ}{d\mu}.$ For
arbitrary distributions $P$ and $Q$ the $f$-divergence $D_{f}(P,Q)\geq0$\ is
defined by the formula
\begin{equation}
D_{f}(P,Q)=\int_{\left\{  q>0\right\}  }f\left(  \frac{p}{q}\right)
~dQ+f^{\ast}\left(  0\right)  P\left(  q=0\right)  \label{4}%
\end{equation}
(for details about the definition (\ref{4}) and properties of the
$f$-divergences, see \cite{Liese2006}, \cite{Liese1987} or \cite{Read1988}).
With this definition
\[
D_{f}\left(  P,Q\right)  =D_{f^{\ast}}\left(  Q,P\right)  .
\]

\begin{example}
The function $f(t)=\left\vert t-1\right\vert $ defines the $L^{1}$-distance%
\begin{equation}
\left\Vert P-Q\right\Vert =\sum_{j=1}^{k}q_{j}\,\left\vert {\frac{p_{j}}%
{q_{j}}-1}\right\vert =\sum_{j=1}^{k}\,\left\vert p_{j}{-q_{j}}\right\vert
\text{ \ \ (cf. (\ref{4}))} \label{V}%
\end{equation}
which plays an important role in information theory and mathematical
statistics (cf. \cite{Barron1992} or \cite{Fedotov2003}). \input{pinsker.TpX}
\end{example}

In (\ref{4}) is often taken the convex function $f$ which is one of the power
functions $\phi_{\alpha}$\ of order $\alpha\in\mathbb{R}$ given in the domain
$t>0$ by the formula
\begin{equation}
\phi_{\alpha}(t)={\frac{t^{\alpha}-\alpha(t-1)-1}{\alpha(\alpha-1)}}\text{
\ \ \ when \ }\alpha(\alpha-1)\neq0 \label{4a}%
\end{equation}
and by the corresponding limits
\begin{equation}
\phi_{0}(t)=-\ln t+t-1\text{ \ \ and \ \ }\phi_{1}(t)=t\ln t-t+1. \label{4b}%
\end{equation}
The $\phi$-divergences
\begin{equation}
D_{\alpha}(P,Q)\overset{def}{=}D_{\phi_{\alpha}}(P,Q),\text{ \ \ }\alpha
\in\mathbb{R} \label{4c}%
\end{equation}
based on (\ref{4a}) and (\ref{4b}) are usually referred to as power
divergences of orders $\alpha.$ For details about the properties of power
divergences, see \cite{Liese2006} or \cite{Read1988}. Next we\ mention the
best known members of the family of statistics (\ref{4c}), with a reference to
the skew symmetry $D_{\alpha}(P,Q)=D_{1-\alpha}(Q,P)$ of the power divergences
(\ref{4c}).$\medskip$

\begin{example}
The $\chi^{2}$-divergence or quadratic divergence
\begin{equation}
D_{2}(P,Q)=D_{-1}(Q,P)={\frac{1}{2}}\sum_{j=1}^{k}{\frac{(p_{j}-q_{j})^{2}
}{q_{j}}} \label{chi}%
\end{equation}
leads to the well known Pearson and Neyman statistics. The information
divergence
\begin{equation}
D_{1}(P,Q)=D_{0}(Q,P)=\sum_{j=1}^{k}p_{j}\ln{\frac{p_{j}}{q_{j}}} \label{7}%
\end{equation}
leads to the log-likelihood ratio and reversed log-likelihood ratio
statistics. The symmetric Hellinger divergence
\[
D_{1/2}(P,Q)=D_{1/2}(Q,P)=H(P,Q)
\]
leads to the Freeman--Tukey statistic.
\end{example}

\begin{example}
The Hellinger divergence and the total variation are symmetric in the
arguments $P$ and $Q.$ Non-symmetric divergences may be symmetrized. For
instance the LeCam divergence is nothing but the symmetrized Pearson
divergence given by
\[
D_{LeCam}\left(  P,Q\right)  =\frac{1}{2}D_{2}\left(  P,\frac{P+Q}{2}\right)
+\frac{1}{2}D_{2}\left(  Q,\frac{P+Q}{2}\right)
\]
Another symmetrized divergence is the Jensen Shannon divergence defined by
\[
JD_{1}\left(  P,Q\right)  =\frac{1}{2}D\left(  P\left\Vert \frac{P+Q}
{2}\right.  \right)  +\frac{1}{2}D\left(  Q\left\Vert \frac{P+Q}{2}\right.
\right)  .
\]
The joint range of total variation with Jensen Shannon divergence was studied
by Bri\"{e}t and Harremo\"{e}s \cite{Briet2009} and is illustrated on Figure
\ref{vsjd}.

\input{vsjd.TpX}
\end{example}

In this paper we shall prove that the joint range of any pair of
$f$-divergences is essentially determined by the range of distributions on a
two-element set. In special cases the significance of determining the range
over two-element set has been pointed out explicitly in \cite{Topsoe2001a}.
Here we shall prove that a reduction to two-elemnt sets can always be made.

\section{\label{sec1}Joint range of $f$-divergences}

In this section we are interested in the range of the map $\left(  P,Q\right)
\rightarrow\left(  D_{f}\left(  P,Q\right)  ,D_{g}\left(  P,Q\right)  \right)
$ where $P$ and $Q$ are probability distributions on the same set.

\begin{definition}
A point $\left(  x,y\right)  \in\mathbb{R}^{2}$ is a $(f,g)$\emph{-divergence
pair} if there exist a Borel space $\left(  \mathcal{X},\mathcal{F}\right)  $
with probability measures $P$ and $Q$ such $\left(  x,y\right)  =\left(
D_{f}\left(  P,Q\right)  ,D_{g}\left(  P,Q\right)  \right)  .$ A
$(f,g)$-divergence pair $\left(  x,y\right)  $ is \emph{achievable in
}$\mathbb{R}^{d} $ if there exist probability vectors $P,Q\in\mathbb{R}^{d}$
such that
\[
\left(  x,y\right)  =\left(  D_{f}\left(  P,Q\right)  ,D_{g}\left(
P,Q\right)  \right)  .
\]

\end{definition}

\begin{lemma}
Assume that
\[
P_{0}\left(  A\right)  =Q_{0}\left(  A\right)  =1
\]
and
\[
P_{1}\left(  B\right)  =Q_{1}\left(  B\right)  =1
\]
and that $A\cap B=\varnothing.$ If $P_{\alpha}=\left(  1-\alpha\right)
P_{0}+\alpha P_{1}$ and $Q_{\alpha}=\left(  1-\alpha\right)  Q_{0}+\alpha
Q_{1} $ then
\[
D_{f}\left(  P_{\alpha},Q_{\alpha}\right)  =\left(  1-\alpha\right)
D_{f}\left(  P_{0},Q_{0}\right)  +\alpha D_{f}\left(  P_{1},Q_{1}\right)  .
\]

\end{lemma}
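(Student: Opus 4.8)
The plan is to work directly from the defining formula \eqref{4}, exploiting the fact that $A$ and $B$ are disjoint so that the integral splits cleanly into a part living on $A$ and a part living on $B$. I would first dispose of the trivial cases $\alpha=0$ and $\alpha=1$, and assume $\alpha\in(0,1)$. Choose as reference measure $\mu=\mu_0+\mu_1$ with $\mu_0=P_0+Q_0$ and $\mu_1=P_1+Q_1$; since $P_0,Q_0$ are carried by $A$ and $P_1,Q_1$ by $B$ with $A\cap B=\varnothing$, the measure $\mu_1$ puts no mass on $A$ and $\mu_0$ puts no mass on $B$. Writing $p_i=\tfrac{dP_i}{d\mu}$, $q_i=\tfrac{dQ_i}{d\mu}$ for $i=0,1$, we get that $\mu$-a.e. on $A$ the densities of $P_\alpha$ and $Q_\alpha$ are $(1-\alpha)p_0$ and $(1-\alpha)q_0$, on $B$ they are $\alpha p_1$ and $\alpha q_1$, and they vanish off $A\cup B$.

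The key computation is then the cancellation of the mixing weights inside the argument of $f$. On $\{q_\alpha>0\}\cap A$ we have $q_\alpha>0$ exactly where $q_0>0$, and
\[
\frac{p_\alpha}{q_\alpha}=\frac{(1-\alpha)p_0}{(1-\alpha)q_0}=\frac{p_0}{q_0},
\]
so $\int_{\{q_\alpha>0\}\cap A} f\!\left(\tfrac{p_\alpha}{q_\alpha}\right)dQ_\alpha=(1-\alpha)\int_{\{q_0>0\}} f\!\left(\tfrac{p_0}{q_0}\right)dQ_0$, using $Q_0(A)=1$. Likewise on $B$ the integral equals $\alpha\int_{\{q_1>0\}} f\!\left(\tfrac{p_1}{q_1}\right)dQ_1$. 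For the singular term, $\{q_\alpha=0\}\cap A=\{q_0=0\}\cap A$ up to a $Q_\alpha$-null (indeed $\mu$-null) adjustment, and since $P_\alpha$ restricted to $A$ is $(1-\alpha)P_0$, we get $P_\alpha(q_\alpha=0,\,A)=(1-\alpha)P_0(q_0=0)$, and symmetrically $P_\alpha(q_\alpha=0,\,B)=\alpha P_1(q_1=0)$; off $A\cup B$ the mass is zero. Summing the $A$-contributions and the $B$-contributions and regrouping gives
\[
D_f(P_\alpha,Q_\alpha)=(1-\alpha)\Big[\!\int f\!\left(\tfrac{p_0}{q_0}\right)dQ_0+f^{\ast}(0)P_0(q_0=0)\Big]+\alpha\Big[\cdots\Big],
\]
i.e. the claimed identity.

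The only real care needed — and the step I expect to be the main obstacle to write cleanly rather than conceptually — is the measure-theoretic bookkeeping: verifying that the events $\{q_\alpha=0\}$ and $\{q_0=0\}$ (resp.\ $\{q_1=0\}$) agree on $A$ (resp.\ $B$) up to null sets, that the Radon--Nikodym derivatives are what I claimed $\mu$-a.e., and that no mass leaks onto $(A\cup B)^{c}$. Once the disjointness of $A$ and $B$ is used to localize everything, these are routine, and the cancellation of $(1-\alpha)$ and $\alpha$ in the density ratio is what makes the whole expression affine in $\alpha$.
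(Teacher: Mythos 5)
Your argument is correct. The paper states this lemma without proof (it is used immediately afterwards to establish Theorem \ref{TheoremConvex}), and the direct verification you give --- choosing the dominating measure $\mu=P_0+Q_0+P_1+Q_1$, noting that disjointness of $A$ and $B$ makes the densities of $P_\alpha,Q_\alpha$ equal to $(1-\alpha)p_0,(1-\alpha)q_0$ on $A$ and $\alpha p_1,\alpha q_1$ on $B$, cancelling the mixing weights inside $f$, and splitting both the integral over $\{q_\alpha>0\}$ and the singular term $f^{\ast}(0)P_\alpha(q_\alpha=0)$ into their $A$- and $B$-parts --- is exactly the intended computation, with the measure-theoretic bookkeeping you flag being routine as you say.
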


\begin{theorem}
\label{TheoremConvex}The set of $(f,g)$-divergence pairs is convex.
\end{theorem}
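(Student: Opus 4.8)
The plan is to derive the theorem directly from the preceding Lemma, which already contains the essential computation. Given two $(f,g)$-divergence pairs $(x_0,y_0)$ and $(x_1,y_1)$, realized respectively by probability measures $(P_0,Q_0)$ on a Borel space $(\mathcal{X}_0,\mathcal{F}_0)$ and $(P_1,Q_1)$ on a Borel space $(\mathcal{X}_1,\mathcal{F}_1)$, together with a mixing parameter $\alpha\in[0,1]$, I want to exhibit probability measures whose $(f,g)$-divergence pair equals $(1-\alpha)(x_0,y_0)+\alpha(x_1,y_1)$.

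First I would pass to the disjoint union $\mathcal{X}=\mathcal{X}_0\sqcup\mathcal{X}_1$, equipped with the natural $\sigma$-algebra $\mathcal{F}$ generated by $\mathcal{F}_0$ and $\mathcal{F}_1$; this is again a Borel space. Put $A=\mathcal{X}_0$ and $B=\mathcal{X}_1$, so that $A$ and $B$ are disjoint measurable sets. Extend $P_0,Q_0$ to probability measures on $\mathcal{X}$ concentrated on $A$ (giving mass $0$ to subsets of $B$), and likewise extend $P_1,Q_1$ to probability measures concentrated on $B$; abusing notation I keep the same names. Since an $f$-divergence is unchanged under such a measure-preserving embedding, these extensions have the same $f$- and $g$-divergences as the originals.

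Now set $P_\alpha=(1-\alpha)P_0+\alpha P_1$ and $Q_\alpha=(1-\alpha)Q_0+\alpha Q_1$, which are probability measures on $\mathcal{X}$. Because $P_0,Q_0$ are supported on $A$, $P_1,Q_1$ are supported on $B$, and $A\cap B=\varnothing$, the Lemma applies with the convex function $f$ and again with $g$, yielding
\[
D_f(P_\alpha,Q_\alpha)=(1-\alpha)x_0+\alpha x_1,\qquad D_g(P_\alpha,Q_\alpha)=(1-\alpha)y_0+\alpha y_1.
\]
Hence $(1-\alpha)(x_0,y_0)+\alpha(x_1,y_1)$ is a $(f,g)$-divergence pair, and since $(x_0,y_0)$, $(x_1,y_1)$ and $\alpha\in[0,1]$ were arbitrary, the set of such pairs is convex.

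The only place that calls for any care — and thus the modest ``main obstacle'' — is the bookkeeping in the reduction step: checking that the disjoint-union construction really produces a legitimate Borel space and that the embedded measures genuinely retain their divergences, so that the hypotheses of the Lemma hold verbatim. Once this is granted the conclusion is immediate, since all the analytic substance has been absorbed into the Lemma.
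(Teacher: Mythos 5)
Your proof is correct and follows essentially the same route as the paper: the paper mixes the two pairs on the product space $\mathcal{X}\times\{0,1\}$ with marginal $(1-\alpha,\alpha)$ on the second factor, which is the same disjoint-support mixture you build via the disjoint union, and both arguments then reduce to the preceding Lemma applied to $f$ and to $g$. Your version is if anything slightly cleaner, since it explicitly allows the two pairs to live on different Borel spaces.
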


\begin{proof}
Assume that $\left(  P,Q\right)  $ and $\left(  \tilde{P},\tilde{Q}\right)  $
are two pairs of probability distributions on a space $\left(  \mathcal{X}
,\mathcal{F}\right)  .$ Introduce a two-element set $B=\left\{  0,1\right\}  $
and the product space $\mathcal{X\times}B$ as a measurable space. Let $\phi$
denote projection on $B.$ Now we define a pair $\left(  \tilde{P},\tilde
{Q}\right)  $of joint distribution on $\mathcal{X\times}B.$ The marginal
distribution of both $\tilde{P}$ is $\tilde{Q}$ on $B$ is $\left(
1-\alpha,\alpha\right)  .$ The conditional distributions are given by
$P\left(  \cdot\mid\phi=i\right)  =P_{i}$ and $Q\left(  \cdot\mid
\phi=i\right)  =Q_{i}$ where $i=0,1.$ Then
\begin{multline*}
\left(
\begin{array}
[c]{c}%
D_{f}\left(  P_{\alpha},Q_{\alpha}\right) \\
D_{g}\left(  P_{\alpha},Q_{\alpha}\right)
\end{array}
\right)  =\\
\left(
\begin{array}
[c]{c}%
\left(  1-\alpha\right)  D_{f}\left(  P_{0},Q_{0}\right)  +\alpha D_{f}\left(
P_{1},Q_{1}\right) \\
\left(  1-\alpha\right)  D_{g}\left(  P_{0},Q_{0}\right)  +\alpha D_{g}\left(
P_{1},Q_{1}\right)
\end{array}
\right) \\
=\left(  1-\alpha\right)  \left(
\begin{array}
[c]{c}%
D_{f}\left(  P_{0},Q_{0}\right) \\
D_{g}\left(  P_{0},Q_{0}\right)
\end{array}
\right)  +\alpha\left(
\begin{array}
[c]{c}%
D_{f}\left(  P_{1},Q_{1}\right) \\
D_{g}\left(  P_{1},Q_{1}\right)
\end{array}
\right) \\
=\left(  1-\alpha\right)  \left(
\begin{array}
[c]{c}%
D_{f}\left(  P,Q\right) \\
D_{g}\left(  P,Q\right)
\end{array}
\right)  +\alpha\left(
\begin{array}
[c]{c}%
D_{f}\left(  \tilde{P},\tilde{Q}\right) \\
D_{g}\left(  \tilde{P},\tilde{Q}\right)
\end{array}
\right)  .
\end{multline*}

\end{proof}

\begin{example}
For the joint range of total variation and Jensen Shannon divergence
illustrated on Figure \ref{vsjd} the set of pairs achievable in $\mathbb{R}%
^{2}$ is not convex but the set of pairs achievable in $\mathbb{R}^{3}$ is
convex and equals the set of all $(f,g)$-divergence pairs.
\end{example}

\begin{theorem}
Any $(f,g)$-divergence pair is a convex combination of two $(f,g)$-divergence
pairs, both of them achievable in $\mathbb{R}^{2}$. Consequently, any
$(f,g)$-divergence pair is achievable in $\mathbb{R}^{4}$.
\end{theorem}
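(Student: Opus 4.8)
The plan is to reduce the statement to a question about finite measures on $[0,1]$. Choosing $\mu=P+Q$ and writing $a=dP/d\mu$, $b=dQ/d\mu$ (so that $a+b=1$ holds $\mu$-a.e.), formula \eqref{4} rewrites as
\[
D_{f}(P,Q)=\int_{\mathcal{X}}\hat{f}(b)\,d\mu ,\qquad \hat{f}(b):=b\,f\!\left(\tfrac{1-b}{b}\right)\ \ (0<b\leq1),\qquad \hat{f}(0):=f^{\ast}(0),
\]
the boundary term $f^{\ast}(0)\,P(q{=}0)$ of \eqref{4} being absorbed because $a=1$ on $\{b=0\}$; note that $\hat{f}$ is convex on $[0,1]$, being the perspective of $f$ restricted to the line $a+b=1$. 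Pushing $\mu$ forward along $b\colon\mathcal{X}\to[0,1]$ yields a nonnegative measure $\nu$ on $[0,1]$ with $\nu([0,1])=\mu(\mathcal{X})=2$ and $\int t\,d\nu(t)=Q(\mathcal{X})=1$, and $\bigl(D_{f}(P,Q),D_{g}(P,Q)\bigr)=\int_{[0,1]}\bigl(\hat{f}(t),\hat{g}(t)\bigr)\,d\nu(t)$. Conversely any such $\nu$ is realized by $\mathcal{X}=[0,1]$ with $b(t)=t$; and when $\nu$ has at most two atoms this construction lives on a two-point set, so the $\nu$'s of this type with at most two atoms correspond exactly to the pairs achievable in $\mathbb{R}^{2}$. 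It therefore suffices to prove: for every nonnegative $\nu$ on $[0,1]$ with $\nu([0,1])=2$ and $\int t\,d\nu=1$, the point $\int(\hat{f},\hat{g})\,d\nu$ is a convex combination of two integrals $\int(\hat{f},\hat{g})\,d\nu'$ with $\nu'$ of the same type but carried by at most two points.

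I would then cut $\nu$ down to at most three atoms. Since the divergence pair is finite, $\nu$ gives no mass to the (at most two) endpoints where $\hat{f}$ or $\hat{g}$ is infinite, so the curve $\Gamma\colon t\mapsto\bigl(t,\hat{f}(t),\hat{g}(t)\bigr)$ is continuous on the relevant set; the point $\tfrac12\bigl(1,D_{f}(P,Q),D_{g}(P,Q)\bigr)=\int\Gamma\,d(\tfrac12\nu)$ is the barycenter of the probability measure $\tfrac12\nu$ and hence lies in the closed convex hull of the (compact, connected) range of $\Gamma$. By the Fenchel--Bunt strengthening of Carath\'{e}odory's theorem for connected sets it is a convex combination of at most three points $\Gamma(t_{1}),\Gamma(t_{2}),\Gamma(t_{3})$; rescaling the weights produces a measure $\nu^{\ast}=\sum_{i=1}^{3}c_{i}\delta_{t_{i}}$ with $\sum c_{i}=2$, $\sum c_{i}t_{i}=1$ and $\int(\hat{f},\hat{g})\,d\nu^{\ast}=\int(\hat{f},\hat{g})\,d\nu$.

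Finally I would split $\nu^{\ast}$ into two two-atom pieces. Order $t_{1}\leq t_{2}\leq t_{3}$; the mass-normalized mean of $\nu^{\ast}$ being $\tfrac12$ forces $t_{1}\leq\tfrac12\leq t_{3}$, and $\sum_{i}c_{i}(\tfrac12-t_{i})=0$. If $t_{2}\leq\tfrac12$ put $\mu_{A}=c_{1}\delta_{t_{1}}+c_{3}'\delta_{t_{3}}$ and $\mu_{B}=c_{2}\delta_{t_{2}}+c_{3}''\delta_{t_{3}}$ with $c_{3}'=c_{1}(\tfrac12-t_{1})/(t_{3}-\tfrac12)$ and $c_{3}''=c_{2}(\tfrac12-t_{2})/(t_{3}-\tfrac12)$; if $t_{2}\geq\tfrac12$ use the mirror split, peeling the needed mass off $t_{1}$ (the degenerate cases $t_{1}=t_{3}=\tfrac12$ or a vanishing $c_{i}$ leave $\nu^{\ast}$ with at most two atoms to begin with). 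The peeled masses are nonnegative, and the identity above gives $c_{3}'+c_{3}''=c_{3}$, so $\nu^{\ast}=\mu_{A}+\mu_{B}$ with each of $\mu_{A},\mu_{B}$ carried by at most two points and of mean equal to half its mass. Writing $m_{A},m_{B}$ for their masses (so $m_{A}+m_{B}=2$) and $\hat{\mu}_{A}=\tfrac{2}{m_{A}}\mu_{A}$, $\hat{\mu}_{B}=\tfrac{2}{m_{B}}\mu_{B}$ for the corresponding mass-$2$, mean-$1$ normalizations, one gets
\[
\bigl(D_{f}(P,Q),D_{g}(P,Q)\bigr)=\tfrac{m_{A}}{2}\!\int(\hat{f},\hat{g})\,d\hat{\mu}_{A}+\tfrac{m_{B}}{2}\!\int(\hat{f},\hat{g})\,d\hat{\mu}_{B},
\]
a convex combination of two pairs achievable in $\mathbb{R}^{2}$. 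Realizing these two pairs on disjoint two-point sets and invoking the preceding Lemma then shows the original pair is achievable on their four-point union, i.e.\ in $\mathbb{R}^{4}$. I expect the main obstacle to be the bookkeeping in this last split---verifying that in every configuration the peeled masses are nonnegative and add up correctly, so that the resulting normalized measures are legitimate---together with the (routine) check that, once the pair is finite, the possibly infinite endpoint values of $\hat{f}$ and $\hat{g}$ cause no difficulty in the Carath\'{e}odory step.
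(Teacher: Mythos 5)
Your argument is correct and reaches the theorem by a genuinely different route. Where the paper normalizes by $Q$, works with $X=p/q$ on $[0,\infty)$ under the inequality constraint $\int X\,dQ\leq1$, invokes the Choquet--Bishop--de~Leeuw theorem to decompose $Q$ over extreme points of that set, classifies those extreme points (one- or two-point supports), and then has to run a second, mirrored decomposition in $P$ to handle the one-point case, you symmetrize by $P+Q$ and use the perspective function on $[0,1]$; this turns the problem into an exact linear constraint (mass $2$, mean $1$), eliminates the boundary term and the case split entirely, and lets you get to finitely many atoms with nothing heavier than Carath\'{e}odory in $\mathbb{R}^{3}$ (the extra coordinate $t$ carrying the constraint). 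Your reduction from three atoms to two is also different: the paper argues topologically that the achievable set is curve-connected and that a curve from $\mathbf{y}_{1}$ to $\mathbf{y}_{2}$ must cross one of the half-lines $\ell_{i}^{+}$, whereas you split the three-atom measure explicitly into two two-atom measures by peeling mass off the extreme atom, with the identity $\sum_{i}c_{i}(\tfrac12-t_{i})=0$ guaranteeing $c_{3}'+c_{3}''=c_{3}$ and nonnegativity coming from the ordering $t_{1}\leq t_{2}\leq\tfrac12\leq t_{3}$. Your version is more constructive and, frankly, easier to make fully rigorous than the paper's connectedness/intersection argument; what it gives up is the paper's structural information about the extreme points of the constraint set, which the paper reuses implicitly elsewhere. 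The one soft spot in your write-up is the parenthetical claim that the range of $\Gamma$ is \emph{compact}: since $\hat{f}$ or $\hat{g}$ may blow up as $t$ approaches $0$ or $1$, the curve can be unbounded even though $\nu$ charges neither endpoint. What you actually need is the standard finite-dimensional fact that the barycenter of a probability measure with finite first moments concentrated on a set $S\subset\mathbb{R}^{3}$ lies in the convex hull of $S$ (not merely its closed convex hull); taking $S=\Gamma(I)$ with $I$ the interval where both $\hat{f}$ and $\hat{g}$ are finite, $S$ is connected and Fenchel--Bunt applies. With that citation corrected the proof is complete.
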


\begin{proof}
Let $P$ and $Q$ denote probability measures on the same measurable space.
Define the set $A=\left\{  q>0\right\}  $ and the function $X=p/q$ on $A.$
Then $Q$ satisfies
\begin{align}
Q\left(  A\right)   &  =1,\label{norm}\\
\int_{A}X~dQ  &  \leq1.\nonumber
\end{align}
Now we fix $X$ and $A.$ The formulas for the divergences become
\begin{align*}
D_{f}\left(  P,Q\right)   &  =\int_{A}f\left(  X\right)  ~dQ+f^{\ast}\left(
0\right)  P\left(  \complement A\right) \\
&  =\int_{A}f\left(  X\right)  ~dQ+f^{\ast}\left(  0\right)  \left(
1-\int_{A}X~dQ\right) \\
&  =\int_{A}\left(  f\left(  X\right)  ~+f^{\ast}\left(  0\right)  \left(
1-X\right)  \right)  ~dQ\\
&  =\mathrm{E}\left[  f\left(  X\right)  +f^{\ast}\left(  0\right)  \left(
1-X\right)  \right]
\end{align*}
and similarly
\[
D_{g}\left(  P,Q\right)  =\mathrm{E}\left[  g\left(  X\right)  ~+g^{\ast
}\left(  0\right)  \left(  1-X\right)  \right]  .
\]
Hence, the divergences only depend on the distribution of $X.$ Therefore we
may without loss of generality assume that $Q$ is a probability measure on
$\left[  0,\infty\right)  $.

Define $C$ as the set of probability measures on $\left[  0,\infty\right)  $
satisfying $\mathrm{E}\left[  X\right]  \leq1.$ Let $C^{+}$ be the set of
additive measures $\mu$ on $\left[  0,\infty\right)  $ satisfying $\mu\left(
A\right)  \leq1$ and $\int_{A}X~d\mu\leq1.$ Then $C^{+}$ is convex and thus
compact under setwise convergence. According to the Choquet--Bishop--de Leeuw
theorem \cite[Sec. 4]{Phelps2001} any other point in $C^{+}$ is the barycenter
of a probability measure over such extreme points. In particular an element
$Q\in C$ is the barycenter of a probability measure $P_{bary}$ over extreme
points of $C^{+}$ and these extreme points must in addition be probability
measures with $P_{bary}$-probability 1. Hence $Q\in C$ is a barycenter of a
probability measure over extreme points in $C.$

Let $Q$ be an element in $C.$ Let $A_{i},i=1,2,3$ be a disjoint cover of
$\left[  0,\infty\right)  $ and assume that $Q\left(  A_{i}\right)  >0.$ Then
\[
Q=\sum_{i=1}^{3}Q\left(  A_{i}\right)  Q\left(  \cdot\mid A_{i}\right)  .
\]
For a probability vector $\lambda=\left(  \lambda_{1},\lambda_{2},\lambda
_{2}\right)  $ let $Q_{\lambda}$ denote the distribution
\[
Q_{\lambda}=\sum_{i=1}^{3}\lambda_{i}Q\left(  \cdot\mid A_{i}\right)  .
\]
Then $Q_{\lambda}$ is element in $C$ if and only if
\begin{equation}
\sum_{i=1}^{3}\lambda_{i}\int_{A}X~dQ\left(  \cdot\mid A_{i}\right)  \leq1.
\label{reduceret}%
\end{equation}
An extreme probability vector $\lambda$ that satisfies (\ref{reduceret}) has
one or two of its weights equal to 0. Hence, if $Q$ is extreme in $C$ and
$A_{i},i=1,2,3$ is a disjoint cover of $A,$ then at least one of the three
sets satisfies $Q\left(  A_{i}\right)  =0.$ Therefore an extreme point $Q\in
C$ is of one of the following two types:

\begin{enumerate}
\item $Q$ is concentrated in one point.

\item $Q$ has support on two points. In this case the inequality $\int
_{A}X~dQ\leq1$ holds with equality and $P\left(  A\right)  =1$ so that $P$ is
absolutely continuous with respect to $Q$ and therefore supported by the same
two-element set.
\end{enumerate}

The formulas for divergence are linear in $Q.$ Hence any $(f,g)$-divergence
pair is a the barycenter of a probability measure $P_{bary}$ over pairs
generated by extreme distributions $Q\in C.$ The extreme distributions of type
$2$ generate pairs achievable in $\mathbb{R}^{2}$.

For extreme points $Q$ concentrated in a single point we can reverse the
argument at make a barycentric decomposition with respect to $P$. If an
extreme $P$ has a two-point support then $Q$ is absolutely continuous with
respect to $P$ and generates a $(f,g)$-divergence pair achievable in
$\mathbb{R}^{2}$. If $P$ is concentrated in a point then this point may either
be identical with the support of $Q$ and the two probability measures are
identical, or the support points are different and $P$ and $Q$ are singular
but still $\left(  P,Q\right)  $ is supported on two points. Therefore any
$(f,g)$-divergence pair has a barycentric decomposition into pairs achievable
in $\mathbb{R}^{2}.$

\input{trekant.TpX}

Let $\mathbf{y}=\left(  y,z\right)  $ be a $(f,g)$-divergence pair. As we have
seen $\mathbf{y}$ is a barycenter of $(f,g)$-divergence pairs achievable in
$\mathbb{R}^{2}$. According to the Carath\'{e}odory's theorem
\cite{Boltyanski2001} any barycentric decomposition in two dimensions may be
obtained as a convex combination of at most three points $\mathbf{y}
_{i},~i=1,2,3.$ as illustrated in Figure \ref{trekant}. Assume that all three
points have positive weight. Let $\ell_{i}$ be the line through $\mathbf{y}$
and $\mathbf{y}_{i}.$ The point $\mathbf{y}$ divides the line $\ell_{i}$ in
two half-lines $\ell_{i}^{+}$ and $\ell_{i}^{-}~,$ where $\ell_{i}^{-}$
denotes the halfline that contains $\mathbf{y}_{i}.$ The lines $\ell_{i}
^{+},i=1,2,3$ divide $\mathbb{R}^{2}$ into three sectors, each of them
containing one of the points $\mathbf{y}_{i},i=1,2,3.$ The set of $(f,g)
$-divergence pairs achievable in $\mathbb{R}^{3}$ is curve-connected so there
exist a continuous curve of $(f,g)$-divergence pairs achievable in
$\mathbb{R}^{2}$ from $\mathbf{y}_{1}$ to $\mathbf{y}_{2}$ that must intersect
$\ell_{1}^{+}\cup\ell_{3}^{+}$ in a point $\mathbf{z}.$ If $\mathbf{z}$ lies
on $\ell_{i}^{+}$ then $\mathbf{y}$ is a convex combination of the two points
$\mathbf{y}_{i}$ and $\mathbf{z}.$ Hence, any $(f,g)$-divergence pair is a
convex combination of two points that are $(f,g)$-divergence pairs achievable
in $\mathbb{R}^{2}$. From the construction in the proof of Theorem
\ref{TheoremConvex} we see that any $(f,g)$-divergence pair is achievable in
$\mathbb{R}^{4}$.
\end{proof}

\begin{remark}
We do not have any example of functions $(f,g)$ such that the set of pairs
achievable in $\mathbb{R}^{3}$ is not convex.
\end{remark}

\begin{remark}
An $f$-divergence on a arbitrary $\sigma$-algebra can be approximated by the
$f$-divergence on its finite sub-algebras. Any finite $\sigma$-algebra is a
Borel $\sigma$-algebra for discrete space so for probability measures $P,Q$ on
a $\sigma$-algebra the point $\left(  D_{f}\left(  P,Q\right)  ,D_{g}\left(
P,Q\right)  \right)  $ is in the closure of the pairs achievable in
$\mathbb{R}^{4}$. For many function pairs $\left(  (f,g)\right)  $ the set of
pairs achievable in $\mathbb{R}^{2}$ is closed and then the set of all
$(f,g)$-divergence pairs is closed and contains $\left(  D_{f}\left(
P,Q\right)  ,D_{g}\left(  P,Q\right)  \right)  $ even if $P,Q$ are measures on
a non-atomic $\sigma$-algebra.
\end{remark}

The set of $(f,g)$-divergence pair that are achievable in $\mathbb{R}^{2}$ can
be parametrized as $P=\left(  1-p,p\right)  $ and $Q=\left(  1-q,q\right)  .$
If we define $\overline{\left(  1-p,p\right)  }=\left(  p,1-p\right)  $ then
$D_{f}\left(  P,Q\right)  =D_{f}\left(  \overline{P},\overline{Q}\right)  .$
Hence we may assume without loss of generality assume that $p\leq q$ and just
have to determine the image of the simplex $\Delta=\left\{  \left(
p,q\right)  \mid0\leq p\leq q\leq1\right\}  .$ This result makes it very easy
to make a numerical plot of the $(f,g)$-divergence pair achievable in
$\mathbb{R}^{2}$ and the joint range is just the convex hull.

\section{Image of the triangle}

In order to determine the image of the triangle $\Delta$ we have to check what
happens at inner points and what happens at or near the boundary. Most inner
points are mapped into inner points of the range. On subsets of $\Delta$ where
the derivative matrix is non-singular the mapping $\left(  P,Q\right)
\rightarrow\left(  D_{f},D_{g}\right)  $ is open according to the open mapping
theorem from calculus. Hence, all inner points that are not mapped into
interior points of the range must satisfy
\[
\left\vert
\begin{array}
[c]{cc}%
\frac{\partial D_{f}}{\partial p} & \frac{\partial D_{g}}{\partial p}\\
\frac{\partial D_{f}}{\partial q} & \frac{\partial D_{g}}{\partial q}%
\end{array}
\right\vert =0.
\]
Depending on functions $f$ and $g$ this equation may be easy or difficult to
solve, but in most cases the solutions will lie on a 1-dimensional manifold
that will cut the triangle $\Delta$ into pieces, such that each piece is
mapped isomorphically into subsets of the range of $\left(  P,Q\right)
\rightarrow\left(  D_{f},D_{g}\right)  .$ Each pair of functions $(f,g)$ will
require its own analysis.

The diagonal $p=q$ in $\Delta$ is easy to analyze. It is mapped into $\left(
D_{f},D_{g}\right)  =\left(  0,0\right)  .$

\begin{lemma}
If $f\left(  0\right)  =\infty,$ and $\lim_{t\rightarrow0}\inf\frac{g\left(
t\right)  }{f\left(  t\right)  }=\beta_{0},$ then the supremum of
\[
\beta\cdot D_{f}\left(  P,Q\right)  -D_{g}\left(  P,Q\right)
\]
over all distributions $P,Q$ is $\infty$ if $\beta>\beta_{0}.$

If $f^{\ast}\left(  0\right)  =\infty,$ and $\lim_{t\rightarrow\infty}
\inf\frac{g\left(  t\right)  }{f\left(  t\right)  }=\beta_{0},$ then the
supremum of
\[
\beta\cdot D_{f}\left(  P,Q\right)  -D_{g}\left(  P,Q\right)
\]
over all distributions $P,Q$ is $\infty$ if $\beta>\beta_{0}.$
\end{lemma}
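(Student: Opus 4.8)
The plan is to exhibit, for the first statement, a sequence of two-point distributions along which a single term of $\beta D_f-D_g$ becomes a positive multiple of $f$ evaluated near $0$ and hence diverges, while all other terms stay bounded; the second statement will then follow from the first by the duality $D_f(P,Q)=D_{f^\ast}(Q,P)$.

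First I would fix $q\in(0,1)$ and choose $\varepsilon\in(0,\beta-\beta_0)$ (the case $\beta_0=-\infty$ is the same argument, only easier). Since $\liminf_{t\to0}g(t)/f(t)=\beta_0$, there is a sequence $t_n\to0$ with $g(t_n)/f(t_n)<\beta_0+\varepsilon$ for all large $n$, and $f(t_n)\to f(0)=\infty$. Put $p_n=t_nq$, $P_n=(1-p_n,p_n)$, $Q=(1-q,q)$, and $r_n=(1-p_n)/(1-q)$. Formula (\ref{4}) gives $D_f(P_n,Q)=(1-q)f(r_n)+q\,f(t_n)$, and similarly for $g$, so
\[
\beta D_f(P_n,Q)-D_g(P_n,Q)=(1-q)\bigl[\beta f(r_n)-g(r_n)\bigr]+q\,f(t_n)\Bigl[\beta-\tfrac{g(t_n)}{f(t_n)}\Bigr].
\]
Since $r_n\to\tfrac1{1-q}\in(0,\infty)$ and convex functions are continuous on $(0,\infty)$, the first bracket converges to a finite constant; in the second term $\beta-g(t_n)/f(t_n)>(\beta-\beta_0)-\varepsilon>0$ and $f(t_n)\to\infty$. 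Hence the right-hand side tends to $+\infty$, so the supremum over all $P,Q$ is $\infty$.

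For the second statement I would pass to $f^\ast$ and $g^\ast$, where $g^\ast(t)=tg(1/t)$ is convex with $g^\ast(1)=0$. Since $g^\ast(t)/f^\ast(t)=g(1/t)/f(1/t)$, we get $\liminf_{t\to0}g^\ast(t)/f^\ast(t)=\liminf_{s\to\infty}g(s)/f(s)=\beta_0$, and by hypothesis $f^\ast(0)=\infty$; so the first statement applies to the pair $(f^\ast,g^\ast)$ and yields $\sup\bigl(\beta D_{f^\ast}(P',Q')-D_{g^\ast}(P',Q')\bigr)=\infty$. Because $\beta D_{f^\ast}(P',Q')-D_{g^\ast}(P',Q')=\beta D_f(Q',P')-D_g(Q',P')$ and $(Q',P')$ ranges over all distribution pairs as $(P',Q')$ does, the supremum of $\beta D_f-D_g$ is $\infty$ as well.

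The one point needing care is the choice of the test family so that the ``other'' term $(1-q)[\beta f(r_n)-g(r_n)]$ remains bounded even when $g(0)=\infty$ too; this is guaranteed by keeping $r_n$ in a compact subset of $(0,\infty)$, where $f$ and $g$ are finite and continuous, so the intended blow-up cannot contaminate it. The rest is routine estimation, so I do not expect a real obstacle beyond this design choice and the duality bookkeeping for the second half.
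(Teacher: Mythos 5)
Your proposal is correct and follows essentially the same route as the paper: both arguments evaluate $\beta D_f-D_g$ on two-point distributions whose likelihood ratio tends to $0$ along a subsequence realizing the liminf (the paper fixes $Q=(1/2,1/2)$, you keep a general $q$, which changes nothing essential), and both obtain the second claim from the first via the duality $D_f(P,Q)=D_{f^\ast}(Q,P)$ together with $\liminf_{t\to0}g^\ast(t)/f^\ast(t)=\liminf_{t\to\infty}g(t)/f(t)$. Your explicit separation of the bounded term $(1-q)\bigl[\beta f(r_n)-g(r_n)\bigr]$ from the divergent term is a slightly more careful write-up of the same estimate.
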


If $g\left(  0\right)  =\infty,$ and $\lim_{t\rightarrow0}\sup\frac{g\left(
t\right)  }{f\left(  t\right)  }=\gamma_{0},$ then the supremum of
\[
D_{g}\left(  P,Q\right)  -\gamma D_{f}\left(  P,Q\right)
\]
over all distributions $P,Q$ is $\infty$ if $\gamma<\gamma_{0}.$

If $g^{\ast}\left(  0\right)  =\infty,$ and $\lim_{t\rightarrow\infty}
\sup\frac{g\left(  t\right)  }{f\left(  t\right)  }=\gamma_{0},$ then the
supremum of
\[
D_{g}\left(  Q,P\right)  -\gamma D_{f}\left(  Q,P\right)
\]
over all distributions $P,Q$ is $\infty$ if $\gamma<\gamma_{0}.$

\begin{proof}
Assume that
\[
f\left(  0\right)  =\infty\text{ \ and \ }\lim_{t\rightarrow0}\inf
\frac{g\left(  t\right)  }{f\left(  t\right)  }=\beta_{0}.
\]
The first condition implies
\[
D_{f}\left(  \left(  1,0\right)  ,\left(  1/2,1/2\right)  \right)  =\infty
\]
and the second condition implies that $g\left(  0\right)  =\infty$ and
\[
D_{g}\left(  \left(  1,0\right)  ,\left(  1/2,1/2\right)  \right)  =\infty.
\]
We have
\begin{multline*}
\frac{D_{g}\left(  \left(  p,1-p\right)  ,\left(  1/2,1/2\right)  \right)
}{D_{f}\left(  \left(  p,1-p\right)  ,\left(  1/2,1/2\right)  \right)  }\\
=\frac{g\left(  2p\right)  /2+g\left(  2\left(  1-p\right)  \right)
/2}{f\left(  2p\right)  /2+f\left(  2\left(  1-p\right)  \right)  /2}\\
=\frac{g\left(  2p\right)  +g\left(  2\left(  1-p\right)  \right)  }{f\left(
2p\right)  +f\left(  2\left(  1-p\right)  \right)  }.
\end{multline*}
Let $\left(  t_{n}\right)  _{n}$ be a sequence such that $\frac{g\left(
t_{n}\right)  }{f\left(  t_{n}\right)  }\rightarrow\beta$ for $n\rightarrow
\infty.$ Then
\[
\frac{D_{g}\left(  \left(  \frac{t_{n}}{2},1-\frac{t_{n}}{2}\right)  ,\left(
1/2,1/2\right)  \right)  }{D_{f}\left(  \left(  \frac{t_{n}}{2},1-\frac{t_{n}
}{2}\right)  ,\left(  1/2,1/2\right)  \right)  }\rightarrow\beta
\]
and the first result follows.

The other three cases follows by interchanging $f$ and $g,$ and/or replacing
$f$ by $f^{\ast}$ and $g$ by $g^{\ast}.$ We have used that
\[
\lim_{t\rightarrow0}\inf\frac{g^{\ast}\left(  t\right)  }{f^{\ast}\left(
t\right)  }=\lim_{t\rightarrow0}\inf\frac{tg\left(  t^{-1}\right)  }{tf\left(
t^{-1}\right)  }=\lim_{t\rightarrow\infty}\inf\frac{g\left(  t\right)
}{f\left(  t\right)  }.
\]

\end{proof}

\begin{proposition}
Assume that $f$ and $g$ are $C^{2}$ and that $f^{\prime\prime}\left(
1\right)  >0$ and $g^{\prime\prime}\left(  1\right)  >0.$ Assume that
$\lim_{t\rightarrow0}\inf\frac{g\left(  t\right)  }{f\left(  t\right)  }>0,$
and that $\lim_{t\rightarrow\infty}\inf\frac{g\left(  t\right)  }{f\left(
t\right)  }>0.$ Then there exists $\beta>0$ such that
\begin{equation}
D_{g}\left(  P,Q\right)  \geq\beta\cdot D_{f}\left(  P,Q\right)
\label{nederen}%
\end{equation}
for all distributions $P,Q.$
\end{proposition}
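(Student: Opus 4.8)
The plan is to reduce the claimed inequality between the two divergences to a pointwise inequality between the generating functions. The point to watch is that $f$ and $g$, being convex with $f(1)=g(1)=0$, need not be nonnegative, so the quotient $g/f$ may behave badly and, worse, $D_f(P,Q)$ and $D_g(P,Q)$ can both be $+\infty$, which would make a careless ``$g\ge\beta f$, hence $D_g\ge\beta D_f$'' argument meaningless. I would therefore first replace $f$ and $g$ by their canonical representatives $\tilde f(t)=f(t)-f'(1)(t-1)$ and $\tilde g(t)=g(t)-g'(1)(t-1)$. Since the divergence generated by an affine function $c\,(t-1)$ is identically $0$ (a one-line computation from (\ref{4})), we get $D_{\tilde f}=D_f$ and $D_{\tilde g}=D_g$. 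Moreover $\tilde f,\tilde g$ are convex, $C^2$, satisfy $\tilde f(1)=\tilde f'(1)=0$ and $\tilde g(1)=\tilde g'(1)=0$, have $\tilde f''(1)=f''(1)>0$ and $\tilde g''(1)=g''(1)>0$, and hence $\tilde f,\tilde g\ge 0$; in fact $\tilde f,\tilde g>0$ on $(0,\infty)\setminus\{1\}$, because if $\tilde f$ vanished at some $t_0\ne 1$ it would vanish on the whole interval between $1$ and $t_0$, forcing $\tilde f''(1)=0$.

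The heart of the matter is to show that $\beta:=\inf_{t>0}\tilde g(t)/\tilde f(t)$ is strictly positive; the proposition is then immediate. The quotient $\tilde g/\tilde f$ is continuous and strictly positive on $(0,\infty)\setminus\{1\}$, and at $t=1$ the singularity is removable: Taylor expanding $\tilde f$ and $\tilde g$ to second order at $1$ gives $\tilde g(t)/\tilde f(t)\to\tilde g''(1)/\tilde f''(1)=g''(1)/f''(1)>0$. So the quotient extends to a continuous, strictly positive function on all of $(0,\infty)$, and it only remains to bound it away from $0$ as $t\to 0^+$ and as $t\to\infty$. This is where the two $\liminf$ hypotheses are used: near an endpoint, if $f$ (equivalently $\tilde f$) stays bounded then $\tilde f$ and $\tilde g$ have strictly positive limits there and the quotient has a positive limit, while if $f(t)\to\infty$ at the endpoint then $\liminf g/f>0$ forces $g(t)\to\infty$ too, the affine corrections become negligible against $f(t)$ and $g(t)$, and $\liminf\tilde g/\tilde f=\liminf g/f>0$. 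A continuous strictly positive function on $(0,\infty)$ with positive lower limits at both ends has a strictly positive infimum, so $\beta>0$.

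The last step is routine. From $\tilde g(t)\ge\beta\,\tilde f(t)$ for all $t>0$ we get, letting $t\to\infty$, also $\tilde g^{\ast}(0)\ge\beta\,\tilde f^{\ast}(0)$. Substituting $\tilde g(p/q)\ge\beta\,\tilde f(p/q)\ge 0$ into the integral in (\ref{4}) and using monotonicity of the integral together with the bound on the singular parts yields $D_{\tilde g}(P,Q)\ge\beta\,D_{\tilde f}(P,Q)$ for all $P,Q$, i.e.\ $D_g(P,Q)\ge\beta\,D_f(P,Q)$.

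The only real obstacle is inside the second step: checking that the two $\liminf$ conditions survive the passage from $f,g$ to $\tilde f,\tilde g$. This needs a short case analysis according to whether $f(0)$ and $f^{\ast}(0)$ are finite or infinite, and it is precisely where the endpoint hypotheses cannot be dispensed with --- if, say, $f(0)=\infty$ but $g(0)<\infty$, then $\tilde g/\tilde f\to 0$ at $0$ and no admissible $\beta$ exists, matching the fact that the hypothesis then fails.
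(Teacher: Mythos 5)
Your proof is correct and follows essentially the same route as the paper's: lower-bound the ratio of the generating functions near $0$ and near $\infty$ using the two $\liminf$ hypotheses, near $1$ using a second-order Taylor expansion, and on the remaining compact set by continuity, then take the minimum of the resulting constants. Your preliminary reduction to the canonical representatives $\tilde f=f-f'(1)(t-1)$, $\tilde g=g-g'(1)(t-1)$ is a welcome refinement rather than a different approach: the paper's Taylor step $f(t)=\tfrac{f''(\theta)}{2}(t-1)^2$ and its claim that the minimum of $g/f$ on the compact set is positive both tacitly assume $f'(1)=g'(1)=0$ and $f,g>0$ off $t=1$, which is exactly what your normalization supplies.
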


\begin{proof}
The inequality $\lim_{t\rightarrow0}\inf\frac{g\left(  t\right)  }{f\left(
t\right)  }>0$ implies that there exist $\beta_{0}$,$t_{0}>0$ such that
$g\left(  t\right)  \geq\beta_{0}f\left(  t\right)  $ for $t<t_{0}.$ The
Inequality $\lim_{t\rightarrow\infty}\inf\frac{g\left(  t\right)  }{f\left(
t\right)  }>0$ implies that there exists $\beta_{\infty}>0$ and $t_{\infty}>0$
such that $g\left(  t\right)  \geq\beta_{\infty}f\left(  t\right)  $ for
$t>t_{\infty.}$ According to Taylor's formula we have
\begin{align*}
f\left(  t\right)   &  =\frac{f^{\prime\prime}\left(  \theta\right)  }
{2}\left(  t-1\right)  ^{2},\\
g\left(  t\right)   &  =\frac{g^{\prime\prime}\left(  \eta\right)  }{2}\left(
t-1\right)  ^{2}%
\end{align*}
for some $\theta$ and $\eta$ between $1$ and $t.$ Hence
\[
\frac{g\left(  t\right)  }{f\left(  t\right)  }=\frac{f^{\prime\prime}\left(
\theta\right)  }{g^{\prime\prime}\left(  \eta\right)  }\rightarrow
\frac{f^{\prime\prime}\left(  1\right)  }{g^{\prime\prime}\left(  1\right)
}\text{ for }t\rightarrow1.
\]
Therefore there there exists $\beta_{1}>0$ and an interval $\left]
t_{-},t_{+}\right[  $ around $1$ such that $\frac{g\left(  t\right)
}{f\left(  t\right)  }\geq\beta_{1}$ for $t\in\left]  t_{-},t_{+}\right[  .$
The function $t\rightarrow\frac{g\left(  t\right)  }{f\left(  t\right)  }$ is
continuous on the compact set $\left[  t_{0},t_{-}\right]  \cup\left[
t_{+},t_{\infty}\right]  $ so it has a minimum $\tilde{\beta}>0$ on this set.
Inequality \ref{nederen} holds for $\beta=\min\left\{  \beta_{0},\beta
_{1},\beta_{\infty},\tilde{\beta}\right\}  .$
\end{proof}

\section{Bounds for power divergences}

As an example we shall determine the exact range of a pair of power
divergences. We have
\begin{align*}
f\left(  t\right)   &  =\phi_{2}(t){,}\\
g\left(  t\right)   &  =\phi_{3}(t){.}%
\end{align*}

In this case we have
\begin{gather*}
D_{f}\left(  \left(  p,1-p\right)  ,\left(  q,1-q\right)  \right)
=\ \ \ \ \ \ \ \ \ \ \ \ \ \ \ \ \ \ \ \ \ \ \ \\
\frac{1}{2}\left(  \frac{\left(  p-q\right)  ^{2}}{q}+\frac{\left(
p-q\right)  ^{2}}{1-q}\right)  ,\\
D_{g}\left(  \left(  p,1-p\right)  ,\left(  q,1-q\right)  \right)
=\ \ \ \ \ \ \ \ \ \ \ \ \ \ \ \ \ \ \ \ \ \ \ \\
\ \ \ \ \ \ \ \ \ \ \ \ \ \ \ \ \frac{1}{6}\left(  \left(  \frac{p}{q}\right)
^{3}q+\left(  \frac{1-p}{1-q}\right)  ^{3}\left(  1-q\right)  -1\right)  .
\end{gather*}
First we determine the image of the triangle. The derivatives are
\begin{align*}
\frac{\partial D_{f}}{\partial p}  &  =\frac{2}{2}\cdot\frac{\left(
p-q\right)  }{\left(  1-q\right)  q}~,\\
\frac{\partial D_{f}}{\partial q}  &  =\frac{1}{2}\cdot\frac{\left(
2pq-q-p\right)  \left(  p-q\right)  }{\left(  1-q\right)  ^{2}q^{2}}~,\\
\frac{\partial D_{g}}{\partial p}  &  =\frac{-3}{6}\cdot\frac{\left(
2pq-q-p\right)  \left(  p-q\right)  }{\left(  1-q\right)  ^{2}q^{2}}~,\\
\frac{\partial D_{g}}{\partial q}  &  =\frac{2}{6}\cdot\frac{\left(
\begin{array}
[c]{c}%
pq+p^{2}+q^{2}-\\
3pq^{2}-3p^{2}q+3p^{2}q^{2}%
\end{array}
\right)  \allowbreak\left(  p-q\right)  }{\left(  q-1\right)  ^{3}q^{3}}~.
\end{align*}
The determinant of derivatives is
\begin{multline*}
\left\vert
\begin{array}
[c]{cc}%
\frac{\partial D_{f}}{\partial p} & \frac{\partial D_{g}}{\partial p}\\
\frac{\partial D_{f}}{\partial q} & \frac{\partial D_{g}}{\partial q}%
\end{array}
\right\vert =\\
\frac{\left(  p-q\right)  ^{2}}{12q^{4}\left(  1-q\right)  ^{4}}\left\vert
\begin{array}
[c]{cc}%
2 & 3p+3q-6pq\allowbreak\\
2pq-q-p & \left(
\begin{array}
[c]{c}%
6pq^{2}-2p^{2}-2q^{2}\\
-2pq+6p^{2}q-6p^{2}q^{2}\allowbreak
\end{array}
\right)
\end{array}
\right\vert \\
=-\frac{1}{12}\left(  \frac{p-q}{q\left(  1-q\right)  }\right)  ^{4}.
\end{multline*}

$\allowbreak$We see that the determinant of derivatives is different from zero
for $p\neq q$ so the interior of $\Delta$ is mapped one-to-one to the image.
Hence we just have to determine the image of points on the boundary of
$\Delta$ (or near the boundary if undefined on the boundary).

For $P=\left(  1,0\right)  $ and $Q=\left(  1-q,q\right)  $ we get
\begin{align*}
D_{f}\left(  P,Q\right)   &  =\frac{1}{2}\left(  q+\frac{q^{2}}{1-q}\right)
=\frac{1}{2}\left(  \frac{1}{1-q}-1\right)  ,\\
D_{g}\left(  P,Q\right)   &  =\frac{1}{6}\left(  \frac{1}{\left(  1-q\right)
^{2}}-1\right)  =\frac{1}{6}\frac{\left(  2-q\right)  q}{\left(  1-q\right)
^{2}}.
\end{align*}
The first equation leads to
\[
q=\left(  1-\frac{1}{2D_{f}+1}\right)
\]
and hence
\[
D_{g}=\frac{2}{3}D_{f}\left(  D_{f}+1\right)  .
\]
We have
\[
\frac{f\left(  t\right)  }{g\left(  t\right)  }=\frac{{\frac{t^{2}
-2(t-1)-1}{2}}}{{\frac{t^{3}-3(t-1)-1}{6}}}\rightarrow\infty\text{ for
}t\rightarrow\infty.
\]
All points $\left(  0,s\right)  ,s\in\left[  0,\infty\right)  $ are in the
closure of the range of $\left(  P,Q\right)  \rightarrow\left(  D_{f}
,D_{g}\right)  .$ By combing these two results we see that the range consists
of the point $\left(  0,0\right)  ,$ all points on the curve $\left(
x,\frac{2}{3}x\left(  x+1\right)  \right)  ,x\in\left(  0,\infty\right)  $,
and all point above this curve.

Similar results holds for any pair of power divergences, but for other pairs
than $\left(  D_{2},D_{3}\right)  $ the computations become much more involved.

Note that the R\'{e}nyi divergences are monotone functions of the power
divergences so our results easily translate into the results on R\'{e}nyi
divergences. More details on R\'{e}nyi divergences can be found in
\cite{Erven2010}.

\section{Acknowledgement}

The authors thank Job Bri\"{e}t and Tim van Erven for comments to a draft of
this paper.

This work was supported by the European Network of Excellence and the GA\v{C}R
grants 102/07/1131 and 202/10/0618.

\setlength{\itemsep}{5pt}

\bibliographystyle{ieeetr}
\bibliography{database1}

\end{document}